\newtheorem{definition}{Definition}
\newtheorem{theorem}{Theorem}
\newtheorem{lemma}{Lemma}
\newtheorem{example}{Example}
\newcommand{\removelatexerror}{\let\@latex@error\@gobble}
\title{Neural Mutual Information Estimation for Channel Coding: State-of-the-Art Estimators, Analysis, and Performance Comparison }
\author{
	  \authorblockN{Rick Fritschek$^{\ast}$, Rafael F. Schaefer$^{\dagger}$, and Gerhard Wunder$^{\ast}$\\[2mm]}
	  \IEEEauthorblockA{\begin{tabular}{cc}
	       \begin{tabular}{c}
	           $^{\ast}$ Heisenberg Communications and Information Theory Group\\
                        Freie Universit\"at Berlin, \\
                        Takustr. 9,
                        14195 Berlin, Germany\\
                        Email: \texttt{\{rick.fritschek, g.wunder\}@fu-berlin.de}
	       \end{tabular}
	       \begin{tabular}{c}
	           $^{\dagger}$ Information Theory and Applications Chair\\
	                        Technische Universit{\"a}t Berlin \\
                            Einsteinufer 25, 10587 Berlin, Germany\\
                            Email: \texttt{rafael.schaefer@tu-berlin.de}
	       \end{tabular}
	  \end{tabular}
}
\thanks{This work was supported by the German Research Foundation (DFG)
under Grants FR 4209/1-1 and SCHA 1944/7-1.}}
\begin{document}
\IEEEoverridecommandlockouts
\maketitle

\begin{abstract}
Deep learning based physical layer design, i.e., using dense neural networks as encoders and decoders, has received considerable interest recently. However, while such an approach is naturally training data-driven, actions of the wireless channel are mimicked using standard channel models, which only partially reflect the physical ground truth. Very recently, neural network based mutual information (MI) estimators have been proposed that directly extract channel actions from the input-output measurements and feed these outputs into the channel encoder. This is a promising direction as such a new design paradigm is fully adaptive and training data-based. This paper implements further recent improvements of such MI estimators, analyzes theoretically their suitability for the channel coding problem, and compares their performance. To this end, a new MI estimator using a \emph{``reverse  Jensen''} approach is proposed.
\end{abstract}

\section{Introduction}
Machine learning and in particular deep learning techniques, i.e., the use of neural networks, is an emerging tool for solving the communication task of noise-robust encoding and decoding of messages. Some of the latest advances involve an end-to-end view of the whole communication chain, where encoding and decoding are learnt simultaneously based on the concept of an \emph{autoencoder} \cite{OShea2017}. An autoencoder maps the input to the output, conditioned on an in-between constraint. For communication, this constraint becomes the communication channel itself. The most basic form of this constraint is given by additive noise on top of the signal that has been sent. Thus, if the channel model is known, one can simultaneously learn appropriate encoding and decoding so that the input message of the network gets encoded robustly, which then also enables correct decoding. The drawback is the required knowledge of the channel model, which needs to be known in advance to train the model. Moreover, the full channel model is required since the autoencoder learns by back-propagating through the channel noise layer, i.e., it needs to find the derivatives of the noise layer. 

There are several approaches to extend this framework to unknown channels including training the network on a generic channel model, such as the additive white Gaussian noise (AWGN) channel, and only tune the resulting decoder online \cite{SBrink2018}. Another approach is to circumvent the issue by using an action and rewards framework which works completely without knowledge of the underlying channel model and only takes into account the immediate rewards of the learned strategies by using \emph{reinforcement learning (RL)} \cite{aoudia2018end, goutay2018deep}. This has the advantage of its flexibility as it learns the basic principles for communication by itself using only training samples from the feedback link. 

However, such approaches are more sample-inefficient than other techniques, which can be augmented by expert knowledge to guide the process and learn from less information. For example, from a communication theoretic perspective, the underlying transition probability of the channel model is the key property which determines the communication rate. It is therefore reasonable to learn this probability distribution or some function of it, to guide the learning process of the communication algorithm. One approach in this direction is to estimate the underlying probability distribution of the channel from samples and model the channel layer based on this approximation. A particular successful approach to estimate and also generate distributions from samples are \emph{generative adversarial networks (GANs)} as introduced in \cite{goodfellow2014generative}. GANs are composed of two competing neural networks (NNs), i.e., a generative and a discriminative one. The generative NN tries to transform a noise input to look like the real data distribution, whereas the discriminative NN compares the samples of the real distribution (from the data) to the fake generated distribution and tries sort out the real from the fake samples. The generative NN therefore learns to imitate the real underlying distribution and can be used as a channel model layer \cite{ye2018GAN,oShea2018GAN}. Note that in this case, an end-to-end learning approach is still applicable in the end after successfully modeling the underlying channel. However, being able to estimate the probability distribution between input and output of the channel allows for a more powerful approach. One can decouple encoder and decoder, and learn the perfect encoder that maximizes the communication rate under the particular noise distribution constraint. In other words, one can use the conditional probability distribution to compute the mutual information (MI) and optimize the system such that it maximizes the MI between the channel input and the channel output. In fact, recent advances show that for additive noise channels one can also skip the in-between step of estimating and generating the channel distribution. It suffices to directly estimate the resulting MI from channel samples and use this to learn the MI maximizing encoding function \cite{FritschekMI19}. This does not only exploit the fact that the probability distribution of the channel is a key factor, but also that the MI is the actual target function to optimize for. 

Even though recent advances in estimating the MI from samples with variational methods combined with neural networks show remarkable success, state-of-the-art estimators have the drawbacks of having either a high bias or a high variance and finding a good estimator for controllable bias and variance remains an open problem. As many communication scenarios follow specific structured channel laws, the question is which estimator works best for these scenarios. In this paper, we therefore investigate the latest state-of-the-art estimators and compare their performances in consideration of specific generic communication channels. We outline a general framework for the estimator design and take this forward to a new variant based on a \emph{``reverse Jensen approach''}. The framework is illustrated with several examples and simulations.
\section{Communication System Model}

We consider a communication model with a transmitter, a channel, and a receiver. The transmitter wants to send a message $m\!\in\! \mathcal{M}\!=\!\{1,2, \ldots, 2^{nR}\}$ at a rate $R$ over a noisy channel using an encoding function $f(m)=x^n(m)\in\mathbb{C}^n$ to make the transmission robust against noise. Moreover, for every message $m\in\mathcal{M}$, we assume an average power constraint $\tfrac{1}{n}\sum_{i=1}^n|x_i(m)|^2\leq P$ on the corresponding codewords $x^n(m)$. The channel can be defined as the transition probability density $p_{Y^n|X^n}(y^n|x^n)$ for input and output sequences $x^n$ and $y^n$. If the channel is further memoryless, one has 
$
    p_{Y^n|X^n}(y^n|x^n)=\prod_{i=1}^{n} p(y_i|x_i),
$
 i.e., the output at time instant $i$ depends only on the corresponding input at time instant $i$ and is independent of the previous inputs.
The receiver uses a decoder $g(y^n)=\hat{m}$ to estimate and recover the original message. Moreover, the block error rate $P_e$ is defined as the average probability of error over all messages
$
    P_e= \frac{1}{|\mathcal{M}|}\sum_{m=1}^{|\mathcal{M}|} \mbox{Pr}(\hat{M}\neq m | M = m).
$
The general problem is now to find the maximal communication rate $R$, such that the error $P_e$ can be made arbitrary small for a sufficiently large $n$. This optimal rate $R$ is called the capacity of the channel and is known to be $C=\max_{p(x)} I(X;Y)$.

\section{Neural Estimation of Mutual Information}

Accurate MI estimation is a long standing problem which received renewed interest in the last years due to its application in the field of deep learning, for example in representation learning or the bottleneck hypothesis. The difficulty of estimating the MI stems from its dependence on the underlying joint probability density, which is unknown in most applications. Classical approaches to estimate the MI are based on binning the probability space \cite{fraser1986independent,darbellay1999estimation}, $k$-nearest neighbor statistics \cite{kraskov2004estimating,gao2015efficient,GaoEstimatorMI}, maximum likelihood estimation \cite{suzuki2008approximating}, and variational lower bounds \cite{barber2003algorithm}. Recently, there has been a surge of papers investigating the variational approach in combination with deep learning methods. These methods introduce a parametric function $T_\theta$, parametrized by the weights $\theta\in \Theta$ of a neural network, acting on the estimated densities. In particular, it can be viewed as a parametric estimate of a density ratio of the underlying distributions \cite{song2019understanding}.

\subsection{Overview of MI Estimators}
\label{estimators}

We will now briefly introduce and discuss some of the latest estimators. Let $\mathcal{F}$ be a family of functions $T_\theta: \mathcal{X} \times \mathcal{Y}\rightarrow \mathbb{R}$ parametrized by the weights $\theta \in \Theta$ of a neural network, then we have the following estimators:

\begin{definition}[MINE \cite{belghazi2018mine}]
\begin{equation}
    I_{\text{MINE}}= \sup_{f\in \mathcal{F}} \mathbb{E}_{p(x,y)}^{}[f(x,y)]-\log \mathbb{E}^{}_{p(x)p(y)}[e^{f(x,y)}]
\end{equation}
\end{definition}

This estimator utilizes the Donsker-Varadhan representation of the Kullback-Leibler divergence, which is a lower bound on the mutual information, and it is shown to converge to the true value of $I$ for increasing sample size $k$. However, note that due to the logarithm outside of the expectation in the second term, Monte Carlo sampling will introduce a bias, which yields neither a lower nor an upper bound on the true MI. In \cite{belghazi2018mine} it was proposed to use an exponential moving average on $\exp(f(X,Y))$ over mini-batches to reduce the bias.

\begin{definition}[NWJ \cite{nguyen2010estimating}]
\begin{equation}
    I_{\text{NWJ}}= \sup_{f\in \mathcal{F}} \mathbb{E}_{p(x,y)}^{}[f(x,y)]- \mathbb{E}^{}_{p(x)p(y)}[e^{f(x,y)-1}]
\end{equation}
\end{definition}
This is a lower bound and can be derived by using the Fenchel duality to bound the $f$-divergence from below. Using the conjugate dual function $f^*=\exp(x-1)$, one obtains a lower bound on the KL-divergence, which leads to this estimator. This is also known as the f-GAN objective \cite{nowozin2016f}. An alternative derivation of this estimator is shown in \cite{poole2018variational}. Note that, in contrast to MINE, the NWJ estimator is unbiased and yields a lower bound on the MI.

\begin{definition}[NCE \cite{oord2018representation}]
\begin{equation}
    I_{\text{NCE}}= \mathbb{E}_{p^K (x,y)}^{}\left[ \tfrac{1}{K}\sum_{i=1}^K \log \frac{e^f(x_i,y_i)}{\tfrac{1}{K} \sum_{j=1}^K e^{f(x_i,y_j)} }\right]
\end{equation}
\end{definition}
This estimator was introduced in the context of representation learning and was derived via noise-contrastive estimation (NCE). It differs from the first two estimators in the sense that it uses negative samples from the marginal distributions. It can be thought of as the categorical cross-entropy of the softmax of $f$, i.e., classifying a positive sample of $f$ correctly. The estimator exhibits low variance at the cost of high bias in the form of the upper bound $\log K$.

\begin{definition}[SMILE \cite{song2019understanding}]
\begin{equation}
    I_{\text{SMILE}}\!=\!\sup_{f\in \mathcal{F}} \mathbb{E}_{p(x,y)}^{}[f(X,Y)]-\log \mathbb{E}^{}_{p(x)p(y)}[c(e^{f(X,Y)},e^{-\tau})]
\end{equation}
\end{definition}
This estimator uses a clipping function $c=\text{clip}(u, v) = \max(\min(u, v), v) $ to constraint the expected value of the marginals in the second term. This reduces the variance in the estimator but introduces some biases. The estimator converges to $I_{\text{MINE}}$ for $\tau \rightarrow \infty$. 

Finally, we also introduce a new estimator in the following using a \emph{``reverse Jensen (RJ)''} approach for the partition function together with a box constraint on $\mathcal{F}$ (instead of the clipping function in SMILE). 
\begin{definition}
[RJE] 
\begin{align*}
I_{\text{RJE}} &  =\sup_{f\in\mathcal{F}_{\tau}}\mathbb{E}_{p(x,y)}^{{}%
}[f(x,y)]\\
& \qquad -\min_{a>b}\frac{a\mathbb{E}_{p(x)p(y)}^{{}}\left[  \log(1+ae^{f(x,y)}%
)\right]  }{(1-(b/a)^{1/2})^{+}}+\log\left(  a\right)
\end{align*}
\end{definition}

Here, $b\geq1$ depends on the critic and the marginals, and $\mathcal{F}%
_{\tau}$ is the set of critics bounded by $\tau$. The estimator is somewhat difficult to tune due to the multiple parameters but, ideally, it is a compromise between the resulting bias and variance, similar to SMILE, and also provides a strict lower bound on the MI.

The frameworks and specific properties of the estimators are discussed in detail in the
next section. For simple exposition, let us identify $p(x,y)$ with the ground truth
probability measure $\mathbb{P}$ and $\mathbb{Q}:=\mathbb{P}_{x}\times\mathbb{P}%
_{y}$ with the marginals $p(x)$ and $p(y)$, respectively. Further, let
$\mathbb{P}^{n}$ and $\mathbb{Q}^{n}$ denote the empirical measures from a set of i.i.d. samples. Let
$\mathbb{G}$ be any positive measure with total variation $|\mathbb{G}|=\int d\mathbb{G}$.
\subsection{Discussion and Analysis}
\label{discussion}

\subsubsection{MI from unnormalized Gibbs measures}

To start with, MINE seeks to estimate a critic $f$ with support on the
ground truth measures $\mathbb{P}_x$ and $\mathbb{P}_y$ that dominate $\mathbb{P}$. To verify the optimality
of the estimator, we can identify with any $f$ the so-called Gibb's measure
$d\mathbb{G}=\frac{e^{f(x,y)}}{\mathbb{E}_{\mathbb{Q}}[e^{f(x,y)}]}d\mathbb{Q}$.
Obviously, by construction, $\mathbb{Q}$ dominates $\mathbb{G}$ and
$\mathbb{G}$ is a probability measure. Hence, we have%
\begin{equation}
\mathbb{E}_{\mathbb{P}}\log\frac{d\mathbb{G}}{d\mathbb{Q}} =\mathbb{E}%
_{\mathbb{P}}[f(x,y)]-\log\mathbb{E}_{\mathbb{Q}}[e^{f(x,y)}] 
\leq\mathbb{E}_{\mathbb{P}}\log\frac{d\mathbb{P}}{d\mathbb{Q}}%
\label{eq:ineq}
\end{equation}
where the inequality is due to $|\mathbb{P}|=|\mathbb{G}|=1$ and the positiveness of 
the Kullback-Leibler divergence $D_{\text{KL}}(\mathbb{P}\|\mathbb{G})$. Moreover, a non-unique optimum is $f^{\ast}=\log
\frac{d\mathbb{P}}{d\mathbb{Q}}+c$, i.e., the estimate of MINE can be
unnormalized. However, they suffer from an unbiased estimate, since
$\mathbb{E}_{\mathbb{Q}}[\log\mathbb{E}_{\mathbb{Q}^{n}}[e^{f(x,y)}]]\neq
\log\mathbb{E}_{\mathbb{Q}}[e^{f(x,y)}]$ which can be observed in the simulations.
A lower bound of the variance $\mathbb{V}_{\mathbb{G},\mathbb{Q}}$ of the partition function estimator $\mathbb{E}_{\mathbb{Q}}[e^{f(x,y)}]$ can be given as
\begin{equation*}
\liminf_{n}n\mathbb{V}_{\mathbb{G},\mathbb{Q}}[\log\mathbb{E}_{\mathbb{Q}^{n}}%
[e^{f(x,y)}]]\geq e^{D_{\text{KL}}(\mathbb{G}\|\mathbb{Q})}-1    
\end{equation*}
which is a straightforward extension of \cite[Theorem 2]{song2019understanding}. Due to the independence of ${\mathbb{Q}^{n}}$ and ${\mathbb{P}^{n}}$ it easily follows that for any (suboptimal) estimate $\mathbb{G}$ (through critic $f$) and in the optimum $\mathbb{P}=\mathbb{G}$ (i.e. optimal $f^*$) the variance of MINE (and also NWJ, see below) scales
exponentially with the estimated mutual information which can be clearly observed in all simulation examples that we have done.

An interesting new direction is obtained when we identify $f$ with the family of unnormalized Gibbs measures.

\begin{definition}
An unnormalized Gibbs measure is defined by%
\[
d\mathbb{G}=\frac{e^{f(x,y)}}{G\left(  f\right)  }d\mathbb{Q},\;\exists\,
c\in\mathbb{R}:\int\frac{e^{f(x,y)+c}}{G\left(  f+c\right)  }d\mathbb{Q}=1
\]
where $G$ is some normalization function.
\end{definition}

One example is actually the NWJ estimator where%
\[
d\mathbb{G}=\frac{e^{f(x,y)}}{\exp\left(  e^{-1}\mathbb{E}_{\mathbb{Q}%
}[e^{f(x,y)}]\right)  }d\mathbb{Q}.%
\]
In general we have here $\mathbb{G}>1$ or $\mathbb{G}<1$. The trick is to show the inequality (\ref{eq:ineq}) in
a different way. For the NWJ estimator, we have%
\begin{align*}
\mathbb{E}_{\mathbb{P}}\log\frac{d\mathbb{G}}{d\mathbb{Q}} &  =\mathbb{E}%
_{\mathbb{P}}[  f\left(  x,y\right)  ]  -\frac{1}{e}\mathbb{E}%
_{\mathbb{Q}}[e^{f(x,y)}]\\
&  \leq\mathbb{E}_{\mathbb{P}}[f(x,y)]-\log\mathbb{E}_{\mathbb{Q}}%
[e^{f(x,y)}]  \leq\mathbb{E}_{\mathbb{P}}\log\frac{d\mathbb{P}}{d\mathbb{Q}}%
\end{align*}
by the simple inequality $\log\left(  x\right)  \leq\frac{x}{e}$. Notably, the
measures identified with $f$ in the first and second line are actually
different, but all inequalities become tight for $f^{\ast}=\log\frac
{d\mathbb{P}}{d\mathbb{Q}}+1$. In our simulations We have also found that such ``self-normalization'' property \cite{poole2018variational} seems to cause no problems in our coding scenario. The
estimator is now in fact unbiased but due to the simple bounding technique the variance of this estimator depends linearly on the partition function estimator
$\mathbb{E}_{\mathbb{Q}}[e^{f(x,y)}]$. We can show that
\[
\mathbb{V}_{\mathbb{G},\mathbb{Q}}[\mathbb{E}_{\mathbb{Q}^{n}}[e^{f(x,y)}]]\geq
\frac{e^{D_{KL}(\mathbb{G}||\mathbb{Q})}-|\mathbb{G}|^{2}}{e^{|\mathbb{G}|}n}%
\]
which, again, is a straightforward extension of \cite[Theorem~2]{song2019understanding}.
The result suggests, somewhat surprising, a slightly smaller variance of NWJ which is verified in the AWGN simulations.

In the following section we ask whether or not the two properties, i.e., unbiased
estimate and lower variance of the estimation, can be combined in some way. Notably,
\cite{song2019understanding} has addressed this issue and proposed SMILE
which simply bounds the variance of the partition function estimator as follows%
\[
\mathbb{V}_{\mathbb{P},\mathbb{Q}}[\mathbb{E}_{\mathbb{Q}^{n}}[e^{f(x,y)}]]\leq
\frac{e^{\tau}-e^{-\tau}}{4n}.%
\]
On the other hand, since $f$ is clipped, the new identified measure is
$d\mathbb{G}=\frac{e^{f(x,y)}}{\exp\left(  \mathbb{E}_{\mathbb{Q}}[c(e^{f(x,y)}%
,e^{-\tau},e^{\tau})]\right)  }d\mathbb{Q}$ so that $\mathbb{E}_{\mathbb{P}}%
\log\frac{d\mathbb{G}}{d\mathbb{Q}}$ and $\mathbb{E}_{\mathbb{P}}\log
\frac{d\mathbb{P}}{d\mathbb{Q}}$ are essentially indifferent which means that SMILE can either over- or undershoot the true ground truth MI. A bound on the bias is also provided in \cite{song2019understanding} as%
\begin{align*}
&\left\vert \mathbb{E}_{\mathbb{Q}}[e^{f(x,y)}]-\mathbb{E}%
_{\mathbb{Q}} [c(e^{f(x,y)}%
,e^{-\tau},e^{\tau})]\right\vert  \\ 
&\qquad\qquad\leq \max \left(  e^{\tau
} -|\mathbb{G}|e^{-2\tau},|\mathbb{G}|-e^{-\tau}\right).
\end{align*}
A proper way of how to select the clip value is an open problem and a real
practical challenge. Therefore, in the following we use a different so-called reverse Jensen's inequality approach.

\subsubsection{Reverse Jensen Approach}
The new RJE approach is based on the following partial converse of Jensen's inequality.
\begin{lemma}
\label{lem:jensen}
For any random variable $X \geq 0$, it holds%
\[
\log{(}\mathbb{E}\left[  {X}\right]  {)}\leq\min_{a>b}\left(  \frac
{a\mathbb{E}\left[  \log(1+a{X})\right]  }{\left(  1-\sqrt{\frac{b}{a}%
}\right)  }-\log(a)\right)
\]
where $b:=\mathbb{E}\left[  {X^{2}}\right]  /\mathbb{E}\left[
{X}\right]  ^{2}<\infty$.
\end{lemma}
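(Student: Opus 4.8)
The plan is to fix an arbitrary $a>b$ and show that the bracketed quantity dominates $\log\mathbb{E}[X]$; taking the minimum over $a>b$ then gives the lemma. Writing $\mu:=\mathbb{E}[X]$ and noting that $b=\mathbb{E}[X^{2}]/\mu^{2}\ge 1$ by Cauchy--Schwarz (so the range $a>b$ is nonempty and $\sqrt{b/a}\in(0,1)$), the target is equivalent, after moving $\log a$ across and combining logarithms, to
\[
\log(a\mu)\le\frac{a\,\mathbb{E}[\log(1+aX)]}{1-\sqrt{b/a}}.
\]
Because $X\ge 0$ forces $\log(1+aX)\ge 0$ pointwise, the whole task reduces to producing a sufficiently strong \emph{lower} bound on $\mathbb{E}[\log(1+aX)]$, and the natural device that converts the second-moment ratio $b$ into such a bound is a Paley--Zygmund tail estimate.

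First I would discard the mass below a threshold $\theta\mu$: for any $\theta\in(0,1)$,
\[
\mathbb{E}[\log(1+aX)]\ge\log(1+a\theta\mu)\,\mathbb{P}(X>\theta\mu)\ge\frac{(1-\theta)^{2}}{b}\log(1+a\theta\mu),
\]
where the first step uses monotonicity and nonnegativity of $\log(1+a\,\cdot)$ on the complementary event, and the second is Paley--Zygmund, $\mathbb{P}(X>\theta\mu)\ge(1-\theta)^{2}\mu^{2}/\mathbb{E}[X^{2}]=(1-\theta)^{2}/b$. The decisive choice is then $\theta=1-\sqrt{b/a}$, which is admissible precisely because $a>b$. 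With it $(1-\theta)^{2}=b/a$, so the prefactor collapses to $\tfrac1a$; multiplying by $a$ and dividing by $1-\sqrt{b/a}=\theta$ reduces everything to the single elementary inequality
\[
\frac{\log(1+\theta L)}{\theta}\ge\log L,\qquad L:=a\mu,\ \theta\in(0,1).
\]

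Finally I would settle this elementary inequality by viewing $h(\theta):=\log(1+\theta L)-\theta\log L$ as a function of $\theta\in[0,1]$ with $L$ fixed: it is concave (the linear term contributes nothing to $h''$), and $h(0)=0$, $h(1)=\log(1+1/L)>0$, so $h$ lies above the chord through its endpoints and hence is nonnegative on $[0,1]$; dividing by $\theta>0$ closes the argument. I expect the main obstacle to be not any isolated computation but \emph{identifying the correct threshold} $\theta=1-\sqrt{b/a}$, since this is the unique choice that makes the Paley--Zygmund constant cancel the explicit factor $a$ while turning the surviving $1-\sqrt{b/a}$ into the denominator $\theta$, leaving a scale-free inequality. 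A secondary point is bookkeeping on the admissible ranges ($\mu>0$, $b\ge 1$, $a>b$) to guarantee $\theta\in(0,1)$ and keep all logarithm arguments where they are needed.
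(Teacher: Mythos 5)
Your proof is correct and complete. Note that the paper itself omits the proof of this lemma (``due to lack of space''), so there is no in-paper argument to compare against; judged on its own, every step of your chain checks out: Paley--Zygmund gives $\mathbb{P}(X>\theta\mu)\ge(1-\theta)^2/b$, the choice $\theta=1-\sqrt{b/a}\in(0,1)$ (admissible since $a>b\ge1$) makes the constant collapse to $1/a$ so that the surviving factor $1-\sqrt{b/a}$ becomes the denominator, and the closing inequality $\log(1+\theta L)\ge\theta\log L$ follows from concavity of $\theta\mapsto\log(1+\theta L)$ together with its values at $\theta=0$ and $\theta=1$ (and holds trivially when $L\le1$). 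The only caveats worth recording explicitly are the implicit assumptions $\mathbb{E}[X]>0$ (needed for $b$ and $\log\mathbb{E}[X]$ to be defined) and $\mathbb{E}[X^2]<\infty$, both of which you flag, and the observation that $\mathbb{E}[\log(1+aX)]\le a\,\mathbb{E}[X]<\infty$ so all expectations involved are finite.
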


The proof of the lemma is omitted due to lack of space.

Define the non-centralized moment with respect to $\mathbb{Q}$ as
$m_{i}(f):=\mathbb{E}_{\mathbb{Q}}[e^{if(x,y)}]$. We have the following theorem.

\begin{theorem}
We have $
I(X;Y)\geq I_{\text{RJE}}$
provided $f^{\ast}\in\mathcal{F}_{\tau}$ is such that $b\geq\frac{m_{2}(f^{\ast}%
)}{m_{1}^{2}(f^{\ast})}$. Moreover, the second moment is lower bounded as %
$m_{2}(f^{\ast})\geq e^{D_{\text{KL}}(\mathbb{G}^{\ast}\|\mathbb{Q})}\geq
e^{\mathbb{E}_{\mathbb{P}}\log\frac{d\mathbb{G}^{\ast}}{d\mathbb{Q}}}$.%

\end{theorem}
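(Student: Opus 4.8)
The plan is to prove the two assertions separately: the lower bound $I(X;Y)\geq I_{\text{RJE}}$ by combining the Donsker--Varadhan inequality \eqref{eq:ineq} with the reverse-Jensen estimate of Lemma~\ref{lem:jensen}, and the second-moment bound by a Jensen-type argument for the Gibbs measure $\mathbb{G}^{\ast}$. For the first part I would start from \eqref{eq:ineq}, which already gives $\mathbb{E}_{\mathbb{P}}[f]-\log m_{1}(f)\leq I(X;Y)$ for every critic $f$. The idea is to replace the intractable log-partition $\log m_{1}(f)=\log\mathbb{E}_{\mathbb{Q}}[e^{f}]$ by the upper bound of Lemma~\ref{lem:jensen} applied to the nonnegative variable $X=e^{f(x,y)}$ under $\mathbb{Q}$. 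That lemma gives $\log m_{1}(f)\leq\min_{a>b}\bigl(a\,\mathbb{E}_{\mathbb{Q}}[\log(1+ae^{f})]/(1-\sqrt{b/a})-\log a\bigr)$ whenever $b=m_{2}(f)/m_{1}^{2}(f)$, so subtracting this larger quantity in place of $\log m_{1}(f)$ can only lower the objective; the RJE functional therefore stays below $\mathbb{E}_{\mathbb{P}}[f]-\log m_{1}(f)\leq I(X;Y)$. Taking $\sup_{f\in\mathcal{F}_{\tau}}$ preserves the inequality and yields the claim.

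The subtlety I would have to address is the role of the hypothesis $b\geq m_{2}(f^{\ast})/m_{1}^{2}(f^{\ast})$, since Lemma~\ref{lem:jensen} is stated with $b$ equal to the variance-type ratio exactly. I must verify that the bound remains valid (merely looser) when $b$ is replaced by a larger value, i.e.\ that the minimized right-hand side is monotone increasing in $b$ for fixed $f$: raising $b$ both shrinks each denominator $1-\sqrt{b/a}$ and shrinks the feasible set $\{a>b\}$ over which the minimum is taken, and both effects raise the value. Granting this monotonicity, the stated condition guarantees the bound applies at the optimizing critic $f^{\ast}$, and the outer $\sup_{f}$ commutes harmlessly with the inner $\min_{a>b}$ because the inequality holds termwise. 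I expect this coupling between $b$, the constraint set $\mathcal{F}_{\tau}$, and the $(\cdot)^{+}$ truncation (inactive precisely because $a>b$) to be the main obstacle, as it is what makes the estimator a genuine, if conservative, lower bound rather than an equality.

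For the second statement I would identify $\mathbb{G}^{\ast}$ with the Gibbs measure $d\mathbb{G}^{\ast}/d\mathbb{Q}=e^{f^{\ast}}$, so that $\log(d\mathbb{G}^{\ast}/d\mathbb{Q})=f^{\ast}$ and $\mathbb{E}_{\mathbb{G}^{\ast}}[e^{f^{\ast}}]=\mathbb{E}_{\mathbb{Q}}[e^{2f^{\ast}}]=m_{2}(f^{\ast})$. The first inequality follows directly from Jensen applied under $\mathbb{G}^{\ast}$: $D_{\text{KL}}(\mathbb{G}^{\ast}\|\mathbb{Q})=\mathbb{E}_{\mathbb{G}^{\ast}}[f^{\ast}]=\mathbb{E}_{\mathbb{G}^{\ast}}[\log e^{f^{\ast}}]\leq\log\mathbb{E}_{\mathbb{G}^{\ast}}[e^{f^{\ast}}]=\log m_{2}(f^{\ast})$, and exponentiating gives $m_{2}(f^{\ast})\geq e^{D_{\text{KL}}(\mathbb{G}^{\ast}\|\mathbb{Q})}$. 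The second inequality is monotonicity of the exponential applied to $D_{\text{KL}}(\mathbb{G}^{\ast}\|\mathbb{Q})\geq\mathbb{E}_{\mathbb{P}}\log(d\mathbb{G}^{\ast}/d\mathbb{Q})$, i.e.\ $\mathbb{E}_{\mathbb{G}^{\ast}}[f^{\ast}]\geq\mathbb{E}_{\mathbb{P}}[f^{\ast}]$.

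Here the honest difficulty is that $\mathbb{E}_{\mathbb{G}^{\ast}}[f^{\ast}]\geq\mathbb{E}_{\mathbb{P}}[f^{\ast}]$ is false for arbitrary critics, so the argument must use the optimality of $f^{\ast}$: at the optimum $f^{\ast}=\log(d\mathbb{P}/d\mathbb{Q})$, whence $\mathbb{G}^{\ast}=\mathbb{P}$ and both sides collapse to $I(X;Y)$, making the second inequality tight. Chaining the two estimates then delivers $m_{2}(f^{\ast})\geq e^{I(X;Y)}$ (consistently, $\mathbb{E}_{\mathbb{P}}[d\mathbb{P}/d\mathbb{Q}]\geq e^{\mathbb{E}_{\mathbb{P}}\log(d\mathbb{P}/d\mathbb{Q})}$ by Jensen), which is the intended conclusion that the partition-function second moment, and hence the estimator variance, grows at least exponentially in the mutual information, mirroring the earlier analysis of MINE and NWJ. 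I would flag the extension of the second inequality to the constrained, suboptimal critic in $\mathcal{F}_{\tau}$ as the one remaining point requiring care.
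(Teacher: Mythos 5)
Your proposal follows essentially the same route as the paper's (very terse) proof: the MI lower bound is obtained from the chain of inequalities in (\ref{eq:ineq}) combined with Lemma~\ref{lem:jensen} (together with the monotonicity in $b$ you correctly identify as necessary when $b$ exceeds the exact moment ratio), and the second-moment bound from the change of measure $d\mathbb{G}^{\ast}=e^{f^{\ast}}d\mathbb{Q}$ plus Jensen and the positivity of the KL divergence. The two subtleties you flag — that the bound must remain valid for looser $b$, and that $D_{\text{KL}}(\mathbb{G}^{\ast}\|\mathbb{Q})\geq\mathbb{E}_{\mathbb{P}}\log\frac{d\mathbb{G}^{\ast}}{d\mathbb{Q}}$ fails for arbitrary critics and genuinely requires (near-)optimality of $f^{\ast}$ — are real, and the paper's two-sentence proof sketch glosses over them as well.
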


\begin{proof}
The bound on the MI is a direct consequence of a chain of inequalities similarly as in (\ref{eq:ineq}) and Lemma \ref{lem:jensen}. The lower bound on the second moment can be proved by a change of measure in the Radon-Nikodym derivative and the positiveness of the KL divergence.  \end{proof}
\vspace*{0.5\baselineskip}

Notably, a bound on the variance can be obtained in a straightforward way by using the ``delta method'' as in MINE but which now depends on $a$ and $b$ (omitted to space limitations). Due to the improved bounding technique from the reverse Jensen's inequality, we expect a smaller variance compared to MINE (and NWJ), which is indeed verified in the simulations.
It must be noted that a critical issue left for future investigation is the bias which also depends on the parameter setting and the actual applied algorithm. Finally, we mention that NCE falls in the general framework but is not competitive for high MI values due to its upper bound $\log n$.

\section{Implementation}
We have implemented the MI estimators with a neural network with two hidden layers, each comprised of 256 nodes and ReLU activation functions. For the input we use a joint critic, rather than a separable critic, since this was shown to yield results with less variance \cite{poole2018variational}. This means that both input samples from $X$ and $Y$ get concatenated and fed into the network. Moreover, we use all marginal samples instead of a shifted version only as in \cite{poole2018variational}. To this end, we draw $K$ samples from the joint distribution $p(x,y)$ and then use all pairs $(x_i,y_j)$, $i\neq j$, for the marginals. This yields $K(K-1)$ samples from the marginals instead of only $K$, which we would obtain from sampling a shifted marginal, i.e., $(x_i, y_{i+1})$. All expectations are replaced by the sample average over a mini-batch. The batch size is chosen to be $64$ for all calculations. Due to space limitations, we only show results for the two simplest channel models: the AWGN channel and the binary symmetric channel (BSC) for continuous and discrete inputs.

\begin{example}
[AWGN]
Let $X,Z \in \mathbb{R}^d$ be independent Gaussian random variables with $X \sim \mathcal{N}(0,I\sigma_x^2)$ and $Z \sim \mathcal{N}(0,I\sigma_z^2)$, and $Y=X+Z$, then the mutual information is given as
$
    I(X;Y)=\tfrac{d}{2} \log (1+\tfrac{\sigma_x^2}{\sigma_z^2}).
$
\end{example}

\begin{example}[BSC]
Let $X\sim$ Bern$(\tfrac{1}{2})$ and $Z \sim$ Bern($\delta$), then $Y=X+Z$ and the mutual information is given by
$
    I(X;Y)=H(Y)-H(Z)=1-h_b(\delta),
$ where $h_b$ denotes the binary entropy function.

\end{example}
We have compared the estimators for the AWGN channel in Fig.~\ref{fig:MI_Simu_Gauss} and for the BSC in Fig.~\ref{fig:MI_Simu_BSC}. (other channel models including Rayleigh are omitted due to space constraints).

To test the learning ability of the channel encoding, we generate $16$ messages uniformly and send them through the initialized encoder, which generates $X^n$ according to $p(x^n,y^n)$. The corresponding samples of $Y^n$ are generated by our AWGN channel, where the noise variance $\sigma_z^2$ is scaled such that we have a resulting signal-to-noise ratio per bit of $7$ $E_b/N_0$ [db]. Note also that the encoded signal $X^n$ has a unit average power normalization $\mathbb{E}(|X_i|^2)=1$, where the expectation is over the signal dimension and the batch size.
The training procedure is similar to \cite{FritschekMI19}, where we alternate between maximizing weights of the estimator $\theta$ and the encoder weights $\phi$ over $\max_{\phi}\max_{\theta} \tilde{I}_{\theta}(X^n_{\phi}(m);Y^n)$. The MI estimator is initially trained with $500$ iterations and batch size $64$. Afterwards we train the encoder for $5$ epochs with $400$ iterations and batch size $64$. After each epoch, we tune the MI estimator with one iteration with batch size $64$. In the end, the decoder is trained for $5$ epochs, with $400$ iterations. During the whole procedure, the learning rate is kept fixed at $0.005$ with the NADAM optimizer. We note that we have not put particular emphasis on optimizing the parameters, for which we expect further improvements. The results are shown in Fig.~\ref{fig:Enc_Gauss} (note that results for the Rayleigh channel are similar, i.e., no gap between estimator performance).
The simulation code is available at \cite{Fritschek2020_code}, implemented with TensorFlow 2.1 \cite{tensorflow2015-whitepaper}. 


\begin{figure}
    \centering
    \includegraphics[scale=0.4]{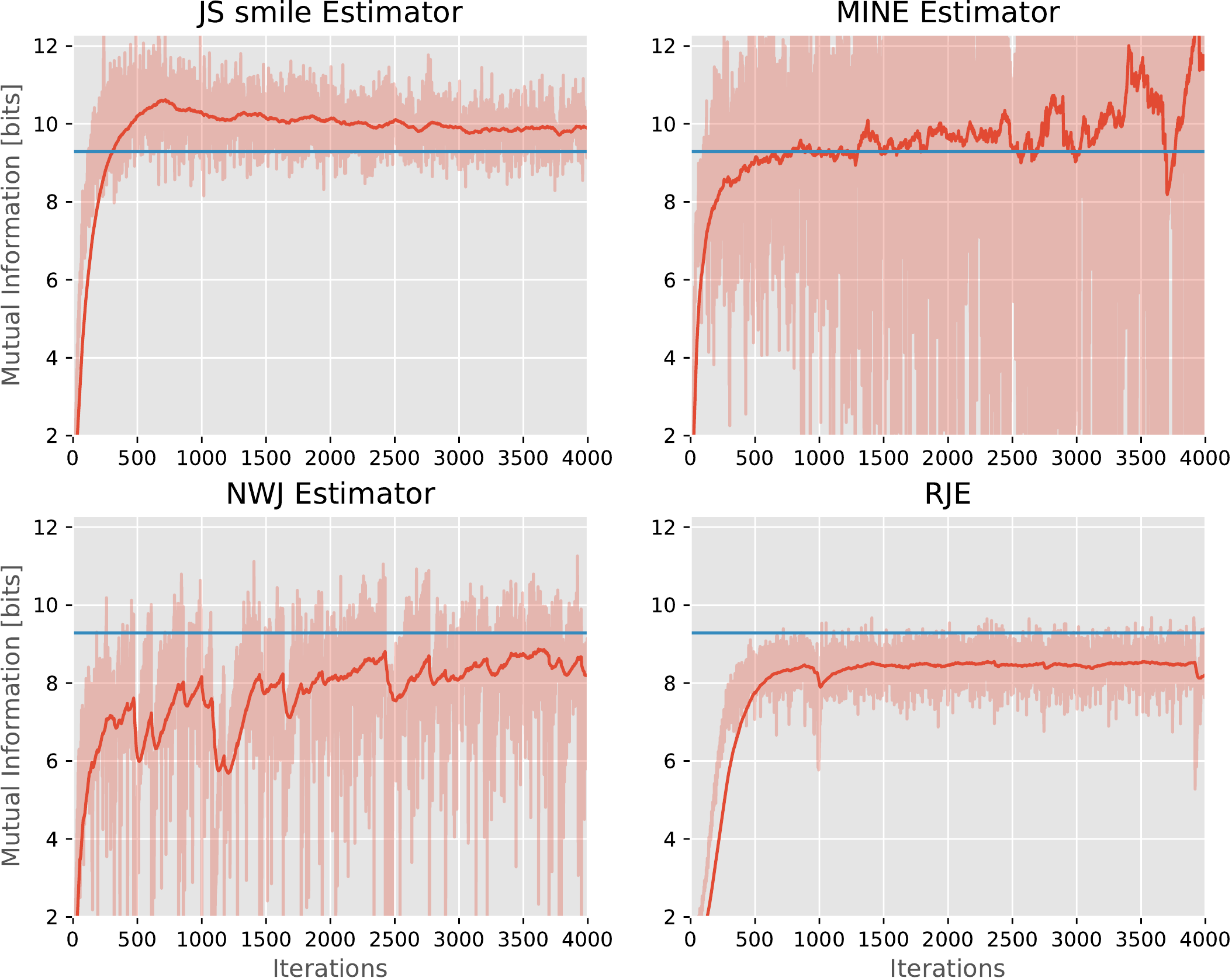}
    \caption{MI estimators for the $8$-dim. AWGN channel with SNR$=4$. Note that NCE runs into its upper bound $\log 64 = 6$ and is omitted. RJE is implemented with $\tau=6$ and $a=2b$.}    \label{fig:MI_Simu_Gauss}
\end{figure}

\begin{figure}
    \centering
    \includegraphics[scale=0.4]{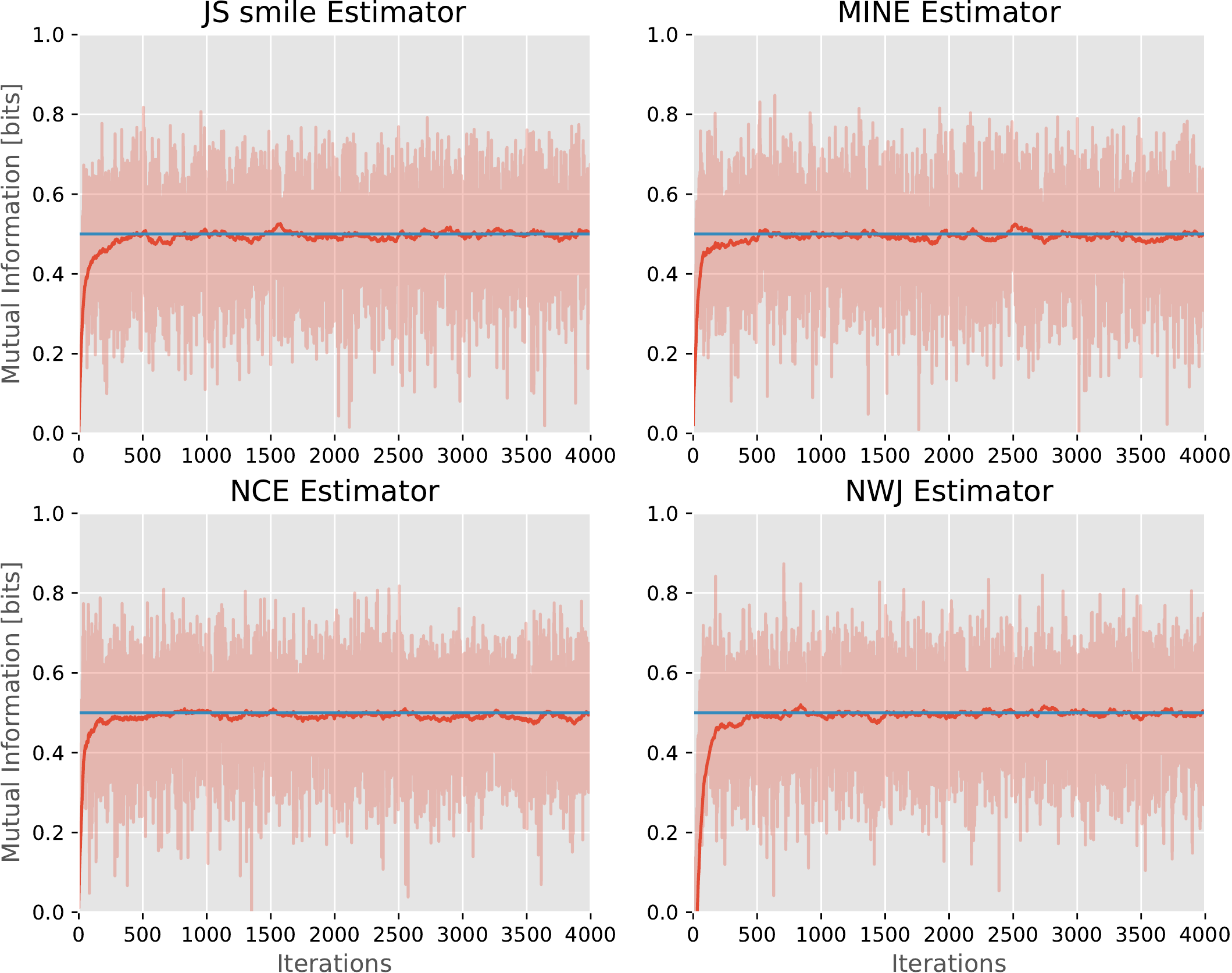}
    \caption{MI estimators for the BSC with $\delta=0.11$, which results in $I \approx 0.5$. SMILE is implemented with $\tau=5$.}
    \label{fig:MI_Simu_BSC}
\end{figure}

\begin{figure}
    \centering
    \includegraphics[scale=0.45]{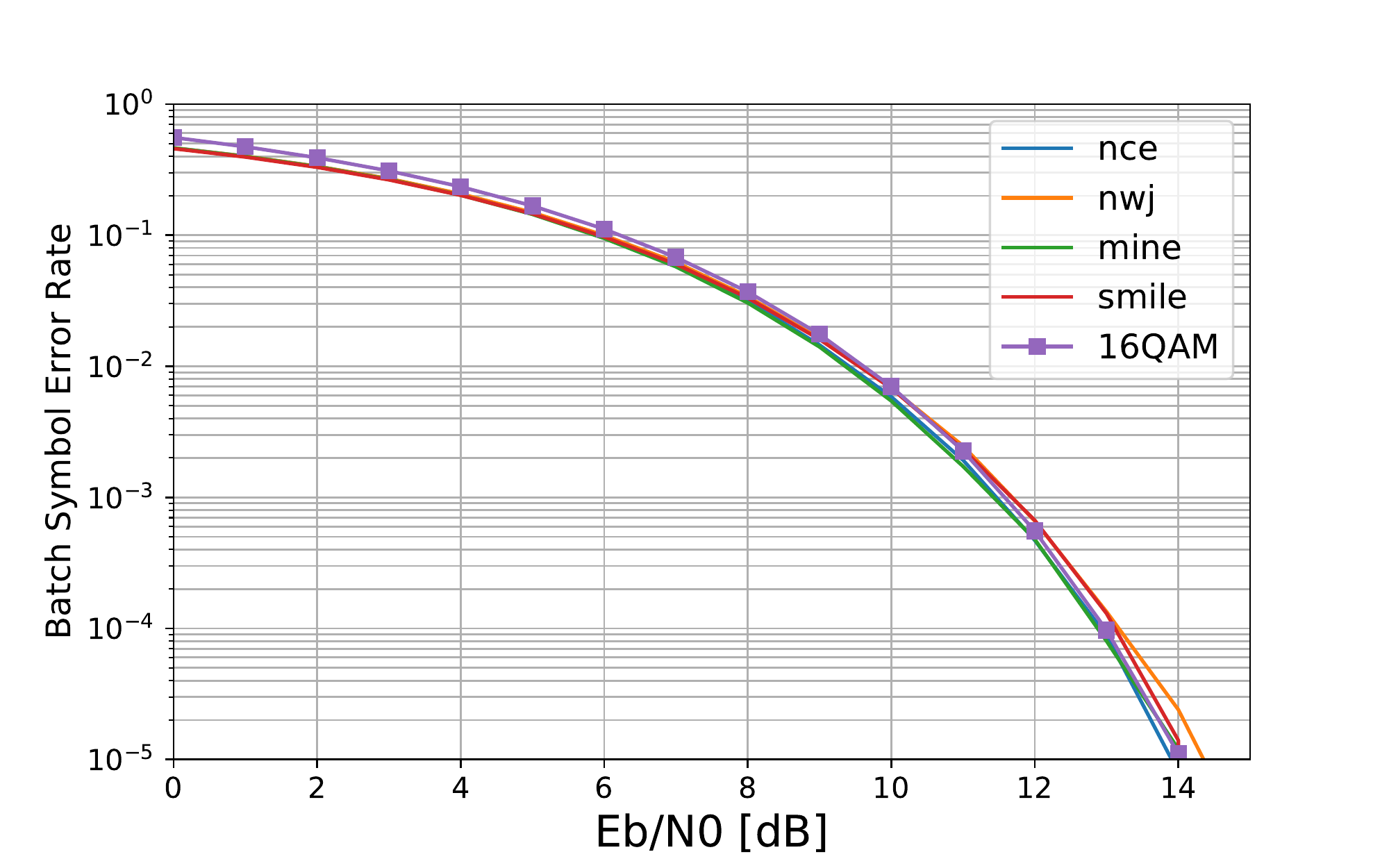}
    \caption{Performance of the learned encoder and decoder pair for an AWGN channel with $n=2$ and $16$ messages input.}
    \label{fig:Enc_Gauss}
\end{figure}

\section{Conclusions and outlook}
In this paper we have investigated variational MI estimation approaches for channel coding. We have seen that the estimators show quite different behaviors in terms of bias and variance for classical channel models. The proposed RJE provides an excellent tradeoff in this regard. Surprisingly, these different behaviors do not affect the performance in the channel coding problem, where all estimators perform quite robustly. One reason might be that we have limited the encoding simulation to $16$ messages with $2$ dimensional variables, wheres the MI estimation simulation is run on Gaussian inputs for $8$-dimensional variables. Consequently, the fastest procedure can be taken which we believe is therefore an excellent alternative to competitive approaches such as reinforcement learning \cite{goutay2018deep}.

\medskip

\small

\bibliographystyle{./IEEEtran}
\bibliography{./ref}

\begin{thebibliography}{10}
\providecommand{\url}[1]{#1}
\csname url@samestyle\endcsname
\providecommand{\newblock}{\relax}
\providecommand{\bibinfo}[2]{#2}
\providecommand{\BIBentrySTDinterwordspacing}{\spaceskip=0pt\relax}
\providecommand{\BIBentryALTinterwordstretchfactor}{4}
\providecommand{\BIBentryALTinterwordspacing}{\spaceskip=\fontdimen2\font plus
\BIBentryALTinterwordstretchfactor\fontdimen3\font minus
  \fontdimen4\font\relax}
\providecommand{\BIBforeignlanguage}[2]{{%
\expandafter\ifx\csname l@#1\endcsname\relax
\typeout{** WARNING: IEEEtran.bst: No hyphenation pattern has been}%
\typeout{** loaded for the language `#1'. Using the pattern for}%
\typeout{** the default language instead.}%
\else
\language=\csname l@#1\endcsname
\fi
#2}}
\providecommand{\BIBdecl}{\relax}
\BIBdecl

\bibitem{OShea2017}
T.~O'Shea and J.~Hoydis, ``An introduction to deep learning for the physical
  layer,'' \emph{IEEE Trans. on Cogn. Commun. Netw.}, vol.~3, no.~4, pp.
  563--575, Dec. 2017.

\bibitem{SBrink2018}
S.~D{\"o}rner, S.~Cammerer, J.~Hoydis, and S.~ten Brink, ``Deep learning based
  communication over the air,'' \emph{IEEE J. Sel. Topics Signal Process.},
  vol.~12, no.~1, pp. 132--143, Feb. 2018.

\bibitem{aoudia2018end}
F.~A. Aoudia and J.~Hoydis, ``End-to-end learning of communications systems
  without a channel model,'' in \emph{Proc. 52nd Asilomar Conf. Signals,
  Systems, Computers}, Pacific Grove, CA, USA, Oct. 2018, pp. 298--303.

\bibitem{goutay2018deep}
M.~Goutay, F.~A. Aoudia, and J.~Hoydis, ``Deep reinforcement learning
  autoencoder with noisy feedback,'' \emph{Preprint arXiv:1810.05419}, 2018.

\bibitem{goodfellow2014generative}
I.~Goodfellow, J.~Pouget-Abadie, M.~Mirza, B.~Xu, D.~Warde-Farley, S.~Ozair,
  A.~Courville, and Y.~Bengio, ``Generative adversarial nets,'' in
  \emph{Advances in neural information processing systems}, 2014, pp.
  2672--2680.

\bibitem{ye2018GAN}
H.~Ye, G.~Y. Li, B.-H.~F. Juang, and K.~Sivanesan, ``Channel agnostic
  end-to-end learning based communication systems with conditional {GAN},'' in
  \emph{Proc. IEEE Global Commun. Conf. Workshops}, Abu Dhabi, United Arab
  Emirates, Dec. 2018, pp. 1--5.

\bibitem{oShea2018GAN}
T.~J. O'Shea, T.~Roy, N.~West, and B.~C. Hilburn, ``Physical layer
  communications system design over-the-air using adversarial networks,'' in
  \emph{Proc. 26th European Signal Process. Conf.}, Rome, Italy, Sep. 2018, pp.
  529--532.

\bibitem{FritschekMI19}
R.~{Fritschek}, R.~F. {Schaefer}, and G.~{Wunder}, ``Deep learning for channel
  coding via neural mutual information estimation,'' in \emph{Proc. 20th IEEE
  Int. Workshop Signal Process. Adv. Wireless Commun.}, Cannes, France, Jul.
  2019, pp. 1--5.

\bibitem{fraser1986independent}
A.~M. Fraser and H.~L. Swinney, ``Independent coordinates for strange
  attractors from mutual information,'' \emph{Physical Review A}, vol.~33,
  no.~2, p. 1134, 1986.

\bibitem{darbellay1999estimation}
G.~A. Darbellay and I.~Vajda, ``Estimation of the information by an adaptive
  partitioning of the observation space,'' \emph{IEEE Trans. Inf. Theory},
  vol.~45, no.~4, pp. 1315--1321, May 1999.

\bibitem{kraskov2004estimating}
A.~Kraskov, H.~St{\"o}gbauer, and P.~Grassberger, ``Estimating mutual
  information,'' \emph{Physical Review E}, vol.~69, no.~6, p. 066138, 2004.

\bibitem{gao2015efficient}
S.~Gao, G.~Ver~Steeg, and A.~Galstyan, ``Efficient estimation of mutual
  information for strongly dependent variables,'' in \emph{Artificial
  Intelligence and Statistics}, 2015, pp. 277--286.

\bibitem{GaoEstimatorMI}
W.~Gao, S.~Oh, and P.~Viswanath, ``Demystifying fixed $k$-nearest neighbor
  information estimators,'' \emph{IEEE Trans. Inf. Theory}, vol.~64, no.~8, pp.
  5629--5661, Aug. 2018.

\bibitem{suzuki2008approximating}
T.~Suzuki, M.~Sugiyama, J.~Sese, and T.~Kanamori, ``Approximating mutual
  information by maximum likelihood density ratio estimation,'' in \emph{New
  challenges for feature selection in data mining and knowledge discovery},
  2008, pp. 5--20.

\bibitem{barber2003algorithm}
D.~Barber and F.~Agakov, ``The {IM} algorithm: A variational approach to
  information maximization,'' in \emph{Proc. 16th Int. Conf. Neural Information
  Processing Systems}.\hskip 1em plus 0.5em minus 0.4em\relax MIT Press, 2003,
  pp. 201--208.

\bibitem{song2019understanding}
J.~Song and S.~Ermon, ``Understanding the limitations of variational mutual
  information estimators,'' in \emph{Proc. 8th Int. Conf. Learning
  Representations}, Addis Ababa, Ethiopia, Apr. 2020.

\bibitem{belghazi2018mine}
I.~Belghazi, S.~Rajeswar, A.~Baratin, R.~D. Hjelm, and A.~Courville, ``{MINE}:
  Mutual information neural estimation,'' in \emph{Proc. 35th Int. Conf.
  Machine Learning}, Stockhom, Sweden, Jul. 2018.

\bibitem{nguyen2010estimating}
X.~Nguyen, M.~J. Wainwright, and M.~I. Jordan, ``Estimating divergence
  functionals and the likelihood ratio by convex risk minimization,''
  \emph{IEEE Trans. Inf. Theory}, vol.~56, no.~11, pp. 5847--5861, Nov. 2010.

\bibitem{nowozin2016f}
S.~Nowozin, B.~Cseke, and R.~Tomioka, ``$f$-{GAN}: Training generative neural
  samplers using variational divergence minimization,'' in \emph{Advances in
  Neural Information Processing Systems}, 2016, pp. 271--279.

\bibitem{poole2018variational}
B.~Poole, S.~Ozair, A.~van~den Oord, A.~A. Alemi, and G.~Tucker, ``On
  variational lower bounds of mutual information,'' in \emph{NeurIPS Workshop
  on Bayesian Deep Learning}, 2018.

\bibitem{oord2018representation}
A.~van~den Oord, Y.~Li, and O.~Vinyals, ``Representation learning with
  contrastive predictive coding,'' \emph{Preprint arXiv:1807.03748}, 2018.

\bibitem{Fritschek2020_code}
R.~Fritschek, ``Simulations {SPAWC} 2020,'' \url{https://github.com/Fritschek},
  2020.

\bibitem{tensorflow2015-whitepaper}
\BIBentryALTinterwordspacing
J.~Dean, R.~Monga \emph{et~al.}, ``{TensorFlow}: Large-scale machine learning
  on heterogeneous systems,'' 2015, software available from tensorflow.org.
  [Online]. Available: \url{https://www.tensorflow.org/}
\BIBentrySTDinterwordspacing

\end{thebibliography}

\end{document}